\documentclass[11pt]{amsart}
\usepackage{geometry}                
\geometry{letterpaper}                   
\usepackage{graphicx}

\usepackage{amsmath,amsthm,amssymb}
\usepackage{epstopdf}

\newtheorem{lemma}{Lemma}
\newtheorem{theorem}{Theorem}
\def \D {Y_{\eta} \times \mathbb{R}}

\title{Rigorous asymptotic study of the screened electrostatic potential in a thin dielectric slab}
\author{Didier Felbacq, Emmanuel Rousseau \\ L2C, Univ Montpellier, France}
\def \e{\varepsilon}

\def \bro{\boldsymbol\rho}
\def \f{\hat{\varphi}}
\def \gf{\varphi}
\def \k{\mathbf{k}}
\def \fl {\rightarrow}
\def \be{\begin{equation}}
\def \ee{\end{equation}}
\begin{document}
\begin{abstract}
The screened Coulomb potential plays a crucial role in the binding energies of excitons in a thin dielectric slab. The asymptotic behavior of this potential is studied when the thickness of the slab is very small as compared to the exciton Bohr radius. A regularized expression is given and the exact effective 2D potential is derived. These expressions may be useful for the computation of the exciton binding energy in 2D or quasi-2D materials.
\end{abstract}
\maketitle
\section{Introduction}
The raise of 2D materials has generated a renewed interest in computing excitons binding energy in thin structures, such as a slab of dielectric material. In that situation, it has been well appreciated in the literature that the Coulomb potential is screened and that it plays a crucial role on the value of the binding energies \cite{thygesen2017,thygesen2015,rubio2010,rubio2011}. Indeed, within the effective mass approximation, the enveloppe of the exciton wave function satisfies a Schrödinger equation involving the screened Coulomb potential. Several approaches have been used to derive an expression for this potential. One of the oldest work on this is an article by Keldysh \cite{Keldysh}. In order to apply this potential to the situation of 2D materials, several approaching have been put forward in order to obtain a strictly 2D potential. In \cite{Keldysh}, the space variables transverse to the material are abruptly put to 0. In \cite{rubio2010,rubio2011} the starting point is a 2D polarisability, whereas in \cite{thygesen2015} the charge distributions are described as lines of charge, in order to obtain an effective 2D potential. Therefore, it seems that a direct limit analysis of the 3D potential when the width of the slab is very small is lacking.

 In this note, we propose a multiple scale approach to the study of the screened potential when the thickness of the slab is very small with respect to the Bohr radius. We obtain a regularized potential by exhibiting explicitly the singular part, i.e. the bare Coulomb potential. The 2D effective potential compares very well with the approximation given in \cite{Keldysh}. 
\section{Expression of the screened potential and regularization}
\subsection{The screened potential}

Consider the geometry described in fig.\ref{chema}. Cylindrical coordinates $(\bro,z)$ are used. We consider the electrostatic interaction between two charges $e$ located respectively at $(\bro_0,z_0)$ and $(0,z'_0)$. We assume that $z_0 \geq z_0'$. The particles are situated in a dielectric slab of permittivity $\e_f$ and width $d$ situated in the interval $z \in [-d/2,d/2]$. The slab is surrounded by two semi-infinite medium of permittivities $\e_1$ for $z<-d/2$ and $\e_2$ for $z>d/2$. 
We denote $$\eta_1=\frac{1}{2} \log \left(\frac{\e_f+\e_1}{\e_f-\e_1}\right), \, \eta_2=\frac{1}{2} \log \left(\frac{\e_f+\e_2}{\e_f-\e_2}\right),$$
and
$$
Y_{\eta}=[-\eta/2,\eta/2]^2  \hbox{ and } Y=[-1/2,1/2]^2.
$$
The electrostatic potential energy between the particles is defined by: $V(\bro,z_0,z'_0)= e' \f_{e}(0,z'_0)=e \f_{e'}(\bro,z_0)$, where $\f_{q}$ is the potentiel created by particle $q$. 

\begin{figure}[ht] 
\begin{center} 
\includegraphics[width=8cm]{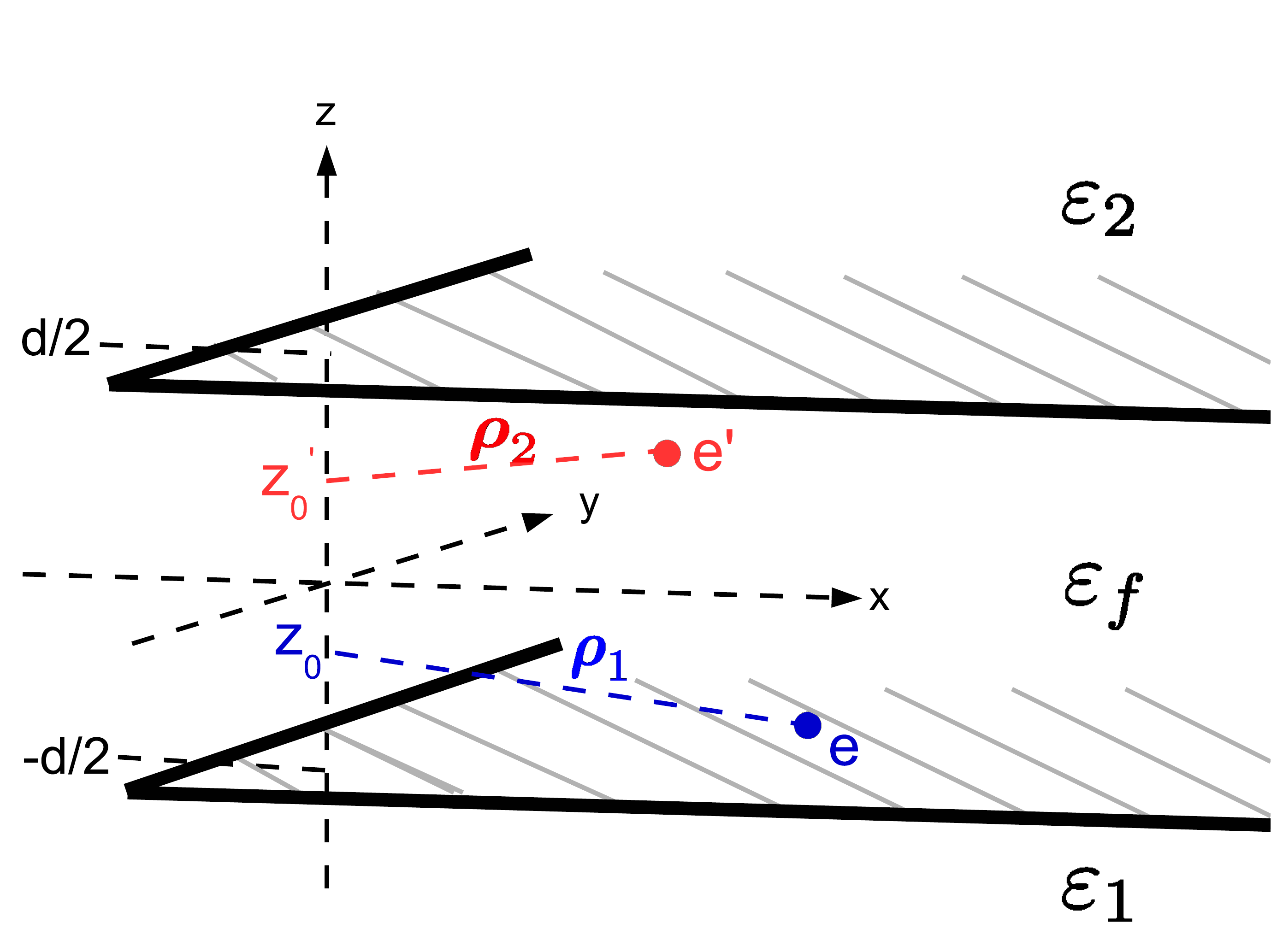}
\caption{Scheme of the structure under study. We consider the electrostatic interaction between two electric charges $e$ and $e'$ located respectively at position $(\bro_1,z_0)$ and $(\bro_2,z_0')$ in a slab of dielectric constant $\varepsilon_f$. The slab width is $d$. It is surrounding by a medium of dielectric constant $\varepsilon_1$ ($z \le -d/2$) and a medium of dielectric constant $\varepsilon_2$ ($z \ge -d/2$).}\label{chema}
\end{center} 
\end{figure}

In the appendix, we give a full derivation of the potential since in \cite{Keldysh} the result is stated without details. After all calculations are performed (see the appendix) the final result is

\begin{equation}
 V(\rho,z_0,z'_0)=\frac{e\,e'}{2\pi \, \e_f d} \,   I(\frac{\rho}{d},\frac{z_0}{d},\frac{z'_0}{d}),
 \end{equation}
 where 
 \be
I(r,x,y)=\int_0^{+\infty} W(u,x,y) J_0(r u) du,
\ee
and the kernel $W$ is given by
$$
W(u,x,y)=\frac{\cosh\left[u(\frac{1}{2}+y) +\eta_1\right]\,\cosh\left[u(\frac{1}{2}-x)+\eta_2\right]}{\sinh\left(u+\eta_1+\eta_2\right)},\, (x,y) \in Y.
$$
 Since the expression was obtained for $z_0>z'_0$ (see appendix), these variables should be switched when $z'_0>z_0$.

From now on, we assume $e=-e'$, that is, the second particle is a hole.
 
 \subsection{Solving the Schr\"odinger equation-Defining new coordinates}
In a generic case, the situation considered in fig.\ref{chema} is a two-body problem. The particles have masses $m_1$ and $m_2$ and are parametrized in cylindrical coordinates by $(\bro_0,z_0)$ and $( \bro'_0,z'_0)$ . We look for a time-independent wave function $\psi(\bro_0,z,\bro'_0,z')$ satisfying:
$$
-\frac{\hbar^2}{2m_1}\Delta_0 \psi-\frac{\hbar^2}{2m_2}\Delta'_0 \psi +V(|\bro_0-\bro'_0|,z_0,z'_0) \psi=E\psi
$$
As usual in this kind of problem, we focus on the relative motion. Here however, we cannot do this for the height variables $z$ since the potential does depend separately on $z_0$ and $z'_0$. Therefore, we denote $\bro=\bro_0-\bro'_0$ and we obtain after eliminating the movement of the center of gravity in the $xOy$ plane:
$$
 \left\{ -\Delta_{||} - \kappa_1 \partial^2_{z_0}- \kappa_2 \partial^2_{z'_0}+ \frac{2\mu}{\hbar^2}V(\rho,z_0,z'_0)\right\} \psi=\frac{2\mu}{\hbar^2} E \psi
$$
where $\mu = m_1 m_2/(m_1+m_2)$ is the reduced mass, $\kappa_1=\mu/m_1,\, \kappa_2=\mu/m_2$ and $\rho=|\bro|$. 

Let us now normalize the variables relatively to the slab width $d$. We define the new variables $z_0=\tilde{z} d,\, z'_0=\tilde{z}' d$. These variables belong to the interval $[-1/2,1/2]$. We also normalize the in-plane variable and define: $\rho=r d$. Plugging these into the equation leads to
$$
 \left\{  -\Delta_{||} - \kappa_1 \partial^2_{\tilde{z}}- \kappa_2 \partial^2_{\tilde{z}'}- \frac{2 \mu  d^2}{\hbar^2} V(r,z,z')\right\} =\frac{2\mu d^2}{\hbar^2} E \psi.
$$
These basic transformations allow to introduce the exciton Bohr radius \be a_0=4 \pi \hbar^2 \epsilon_f/\mu e^2,\ee quite naturally. 

From now on, we denote $\eta=d/a_0$ which is the small parameter of our problem. 
The final variables are now $\zeta$ and $\zeta'$, satisfying $\zeta=\tilde{z} \eta,\, \zeta'=\tilde{z}' \eta$. These new variables belong to $[-\eta/2,\eta/2]$. 

The final spectral equation with a small parameter is
\be
 \left\{  -\Delta_{||} - \eta^2 \kappa_1  \partial^2_\zeta- \eta^2 \kappa_2 \partial^2_{\zeta'}- 4 \eta I(r,\zeta/\eta,\zeta'/\eta) \right\} \psi=E_{\eta} \psi,
\ee
where: $E_{\eta}=(2\mu d^2/\hbar^2)  E$. 
\subsection{Regularization of the potential}
It is clear that when the slab has an infinite width, one should recover the usual Coulomb potential $V_C(r)=1/4\pi \e_f r$. Therefore we expect that $V$ be a perturbation of $V_C$. In this section, we exhibit the singular part of $V$.
First, we establish two lemma.
\begin{lemma}
The kernel of screened Coulomb potential has the following asymptotic behavior
$$
W(u,x,y) \sim \frac{1}{2} e^{-u |x-y|} \hbox{ as } u \fl +\infty.
$$
\end{lemma}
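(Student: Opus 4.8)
The plan is to replace each hyperbolic function in $W$ by its dominant exponential as $u\to+\infty$ and to verify that the resulting expression collapses to $\tfrac12 e^{-u|x-y|}$, with all remaining contributions decaying at a strictly faster exponential rate. I work in the physically relevant regime $x>y$ (corresponding to $z_0>z_0'$), so that $|x-y|=x-y$, and I treat $(x,y)$ as an interior point of $Y$. Writing $A=u(\tfrac12+y)+\eta_1$, $B=u(\tfrac12-x)+\eta_2$ and $S=u+\eta_1+\eta_2$, I first note that $A,B,S\to+\infty$ with $u$, precisely because $\tfrac12+y>0$, $\tfrac12-x>0$ and the coefficient of $u$ in $S$ equals $1$. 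Hence $\cosh A\sim\tfrac12 e^{A}$, $\cosh B\sim\tfrac12 e^{B}$, $\sinh S\sim\tfrac12 e^{S}$, and cancelling the prefactors gives $W\sim\tfrac12 e^{A+B-S}$; a direct computation yields the exponent $A+B-S=u(y-x)=-u(x-y)$, which is already the announced leading term.

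To make this a controlled asymptotic I would linearize the numerator exactly through the product-to-sum identity $\cosh A\,\cosh B=\tfrac12[\cosh(A+B)+\cosh(A-B)]$, noting that $A+B=S+u(y-x)$ and $A-B=u(x+y)+(\eta_1-\eta_2)$. Dividing $\tfrac12\cosh(A+B)$ by $\sinh S$ and normalizing numerator and denominator by $e^{S}$, the first term is asymptotically $\tfrac12 e^{u(y-x)}$, the discarded pieces being exponentially smaller whenever $x-y<1$. The second term, $\tfrac12\cosh[u(x+y)+\eta_1-\eta_2]/\sinh S$, has numerator growing at rate $e^{u|x+y|}$, strictly slower than the $e^{u}$ growth of $\sinh S$ since $|x+y|<1$ at an interior point, and is therefore $O(e^{-(1-|x+y|)u})$ and negligible. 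Combining the two estimates gives $W(u,x,y)\sim\tfrac12 e^{u(y-x)}=\tfrac12 e^{-u|x-y|}$.

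The one delicate point, and where I expect the argument to need care, is the boundary of $Y$: when $y=-\tfrac12$ the factor $\cosh A$ degenerates to the constant $\cosh\eta_1$ (and symmetrically $\cosh B\to\cosh\eta_2$ when $x=\tfrac12$), so the positive exponential no longer dominates and the prefactor acquires an extra factor $\tfrac12(1+e^{-2\eta_1})$. On these measure-zero edges the exponential rate $e^{-u|x-y|}$ is still correct, but the constant $\tfrac12$ is not sharp. Since the lemma serves only to extract the singular large-$u$ part of the integral $I(r,x,y)$, restricting it to the interior (equivalently, to generic charge positions) is harmless, and I would state the asymptotic with that understanding.
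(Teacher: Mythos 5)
Your proof is correct and is essentially the paper's own argument carried out in detail: the paper's proof is the one-line remark that the claim is elementary upon writing $\cosh$ and $\sinh$ in terms of exponentials, which is exactly your first paragraph, and your product-to-sum bookkeeping and your boundary caveat (at $y=-\tfrac12$ or $x=\tfrac12$ the rate is unchanged but the constant $\tfrac12$ becomes $\tfrac12(1+e^{-2\eta_1})$, resp.\ $\tfrac12(1+e^{-2\eta_2})$) are refinements the paper does not give. One step you should make explicit: for the second term to be negligible \emph{relative to the first} (which itself decays like $e^{-u(x-y)}$), you need $1-|x+y|>x-y$, and this indeed holds at interior points since the difference equals $1-2x$ when $x+y\ge 0$ and $1+2y$ when $x+y<0$, both strictly positive.
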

\begin{proof}
This is quite elementary using the expression of $\sinh$ and $\cosh$ in terms of the exponential.
\end{proof}
In order to evaluate the singular part, we need the following result
\begin{lemma}\label{lembessel}
The following result holds for $\Re \alpha > |\Im \beta|$
$$
\int_0^{+\infty} e^{-\alpha x}\, J_0(\beta x) dx= \frac{\beta^{-\nu} [\sqrt{\alpha^2+\beta^2}-\alpha]^{\nu}}{\sqrt{\alpha^2+\beta^2}}.
$$
\end{lemma}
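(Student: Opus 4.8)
The plan is to prove the formula by expanding the Bessel function in its Maclaurin series and integrating term by term. Writing $J_\nu(\beta x)=\sum_{k\geq 0}\frac{(-1)^k}{k!\,\Gamma(k+\nu+1)}\left(\frac{\beta x}{2}\right)^{2k+\nu}$ and using the elementary Laplace integral $\int_0^{+\infty}x^{s}e^{-\alpha x}\,dx=\Gamma(s+1)/\alpha^{s+1}$ (valid for $\Re\alpha>0$, $\Re s>-1$), each term integrates explicitly, producing
\[
\int_0^{+\infty}e^{-\alpha x}J_\nu(\beta x)\,dx=\left(\frac{\beta}{2}\right)^{\nu}\frac{1}{\alpha^{\nu+1}}\sum_{k\geq 0}\frac{(-1)^k}{k!}\,\frac{\Gamma(2k+\nu+1)}{\Gamma(k+\nu+1)}\left(\frac{\beta}{2\alpha}\right)^{2k}.
\]
Before manipulating this series I would first secure the interchange of summation and integration. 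For real $\alpha>0$ and real $\beta$ the double series is absolutely convergent, so Fubini applies; the general hypothesis $\Re\alpha>|\Im\beta|$ --- which is exactly the half-plane where $e^{-\alpha x}J_\nu(\beta x)$ is integrable, since the integrand decays like $e^{-(\Re\alpha-|\Im\beta|)x}$ up to algebraic factors --- is then recovered at the end by analytic continuation, both sides being holomorphic there.

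The core computation is the resummation of the remaining series. I would apply the Legendre duplication formula $\Gamma(2k+\nu+1)=\pi^{-1/2}2^{2k+\nu}\Gamma\!\left(k+\tfrac{\nu+1}{2}\right)\Gamma\!\left(k+\tfrac{\nu}{2}+1\right)$ to turn the ratio of Gamma functions into a product of Pochhammer symbols, after which the sum becomes a generalized binomial series in the variable $-\beta^2/\alpha^2$. Summing it in closed form, collecting the prefactors, and using the identity $(\sqrt{\alpha^2+\beta^2}-\alpha)(\sqrt{\alpha^2+\beta^2}+\alpha)=\beta^2$ to tidy the algebra yields the stated right-hand side $\beta^{-\nu}(\sqrt{\alpha^2+\beta^2}-\alpha)^{\nu}/\sqrt{\alpha^2+\beta^2}$. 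For the case $\nu=0$ actually used in the sequel this reduces to $1/\sqrt{\alpha^2+\beta^2}$.

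I expect the resummation step to be the main obstacle: matching the Gamma-function ratio to the coefficients of a binomial series requires careful bookkeeping with the duplication formula, and one must check that $-\beta^2/\alpha^2$ lies where the binomial series converges, or else close the gap by analytic continuation. A cleaner alternative, which I would keep in reserve to sidestep the series gymnastics, is to use Poisson's integral representation $J_\nu(z)=\frac{(z/2)^\nu}{\sqrt\pi\,\Gamma(\nu+\frac12)}\int_{-1}^{1}(1-t^2)^{\nu-1/2}\cos(zt)\,dt$, interchange the $t$- and $x$-integrals (justified for $\Re\alpha>|\Im\beta|$), evaluate the inner Laplace integral as $\Gamma(\nu+1)\,\Re(\alpha-i\beta t)^{-\nu-1}$, and reduce the remaining $t$-integral to a Beta-type integral. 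In the special case $\nu=0$ the whole argument collapses to swapping the order in $J_0(\beta x)=\frac{1}{2\pi}\int_0^{2\pi}e^{i\beta x\cos\theta}\,d\theta$ and evaluating $\frac{1}{2\pi}\int_0^{2\pi}\frac{d\theta}{\alpha-i\beta\cos\theta}$ by a residue computation on the unit circle, which gives $1/\sqrt{\alpha^2+\beta^2}$ at once.
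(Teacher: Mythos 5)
Your proposal is correct in outline, but it follows a genuinely different route from the paper --- for the simple reason that the paper does not prove this lemma at all: its ``proof'' is a one-line citation to the Jeffrey--Zwillinger table of integrals, where this is the classical Laplace transform of $J_\nu$. (That is also why a stray $\nu$ appears on the right-hand side of the statement even though the left-hand side has $J_0$; in the case $\nu=0$ that the paper actually uses, the right-hand side is simply $1/\sqrt{\alpha^2+\beta^2}$.) Your series derivation --- expand the Bessel function, integrate term by term, apply duplication, resum --- is essentially the classical textbook proof of that table entry, and it buys self-containedness: the hypothesis $\Re\alpha>|\Im\beta|$ gets explained rather than quoted, since it is exactly the condition making $e^{-\alpha x}J_\nu(\beta x)$ integrable. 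The paper's citation, of course, buys brevity.

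Two technical points in your write-up deserve attention. First, the Fubini justification as stated is too generous: for real $\alpha>0$ and real $\beta$, the termwise-integrated series has general term of order $(\beta/\alpha)^{2k}k^{-1/2}$, so absolute convergence holds only for $|\beta|<\alpha$; arbitrary real $\beta$ does not suffice. This is harmless, because the analytic-continuation step you already invoke extends the identity from $|\beta|<\alpha$ to the full region $\Re\alpha>|\Im\beta|$, where both sides are holomorphic. Second, for general $\nu$ the resummation is not a binomial series: after duplication the sum is the Gauss function ${}_2F_1\bigl(\tfrac{\nu+1}{2},\tfrac{\nu}{2}+1;\nu+1;-\beta^2/\alpha^2\bigr)$, and closing it requires the quadratic transformation ${}_2F_1\bigl(a,a+\tfrac12;2a;z\bigr)=(1-z)^{-1/2}\bigl(\tfrac{2}{1+\sqrt{1-z}}\bigr)^{2a-1}$ with $a=\tfrac{\nu+1}{2}$ --- a nontrivial identity you would have to prove or cite, which partly defeats the purpose of avoiding the table. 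In the $\nu=0$ case that the lemma actually asserts, however, your argument is genuinely elementary: $\Gamma(2k+1)/\Gamma(k+1)$ gives central binomial coefficients and $\sum_k\binom{2k}{k}t^k=(1-4t)^{-1/2}$ closes the sum. Your residue alternative via $J_0(\beta x)=\frac{1}{2\pi}\int_0^{2\pi}e^{i\beta x\cos\theta}\,d\theta$ is also complete, and is arguably the cleanest self-contained proof: there $\Re\alpha>|\Im\beta|$ appears as exactly the condition for the inner Laplace integral to converge uniformly in $\theta$.
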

\begin{proof}
See \cite[p.702]{zwillinger} formula (6.23 (3)).
\end{proof}
We are now in a position to exhibit a regularized potential, that is, which is not singular at the origin
\begin{theorem}
The following decomposition holds
\be \label{CoulCorr}
I(r,x,y)=\int_0^{+\infty} \left(W(u,x,y)-\frac{1}{2} e^{-u |x-y| }\right) J_0(r u) du+\frac{1}{2 \sqrt{r^2+|x-y|^2}}.
\ee
The kernel of the first integral tends exponentially fast towards 0 and it defines a function that is regular near the origin $x=y=0$.
\end{theorem}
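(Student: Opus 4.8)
The plan is to obtain \eqref{CoulCorr} by the elementary add-and-subtract device, then to divide the labour between the two lemmas: Lemma~\ref{lembessel} produces the explicit singular term, and the (sharpened) first lemma controls the regular remainder. First I would write
$$
I(r,x,y)=\int_0^{+\infty}\left(W(u,x,y)-\tfrac12 e^{-u|x-y|}\right)J_0(ru)\,du+\tfrac12\int_0^{+\infty}e^{-u|x-y|}J_0(ru)\,du,
$$
which is a mere identity provided each piece converges. The second integral is computed directly from Lemma~\ref{lembessel} with $\alpha=|x-y|$, $\beta=r$ and $\nu=0$ (the order of $J_0$), yielding $\int_0^{+\infty}e^{-u|x-y|}J_0(ru)\,du=1/\sqrt{r^2+|x-y|^2}$ and hence the term $\tfrac12/\sqrt{r^2+|x-y|^2}$ in \eqref{CoulCorr}. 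The only care needed is that Lemma~\ref{lembessel} asks for $\Re\alpha>|\Im\beta|$, i.e.\ $|x-y|>0$; the borderline $x=y$ (which contains the origin) is recovered by continuity, since $\int_0^{+\infty}J_0(ru)\,du=1/r$ holds for every $r>0$.

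The substance of the statement is the exponential decay of the remainder kernel, and this does \emph{not} follow from the first lemma as stated, which only gives the ratio $W/(\tfrac12 e^{-u|x-y|})\fl1$. So I would sharpen the asymptotics: writing each $\cosh$ and $\sinh$ in $W$ as exponentials, factoring $e^{u+\eta_1+\eta_2}$ out of numerator and denominator, and using $x\ge y$ (equivalently $z_0\ge z_0'$) so that $|x-y|=x-y$, a short computation gives
$$
W(u,x,y)-\tfrac12 e^{-u|x-y|}=\frac{1}{2}\,\frac{e^{u(x+y-1)-2\eta_2}+e^{-u(x+y+1)-2\eta_1}+e^{-u(2+y-x)-2\eta_1-2\eta_2}+e^{u(y-x-2)-2\eta_1-2\eta_2}}{1-e^{-2u-2\eta_1-2\eta_2}}.
$$
Each exponent in the numerator has a strictly negative slope in $u$ when $(x,y)$ stays near the origin (the slowest two slopes, $x+y-1$ and $-(x+y+1)$, both tending to $-1$), while the denominator tends to $1$; hence the remainder is $O(e^{-cu})$ with $c$ close to $1$, uniformly for $(x,y)$ in a neighbourhood of $(0,0)$. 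I expect this bookkeeping of the four exponential terms, together with checking that their slopes stay negative, to be the genuine obstacle. I would also record that in the physical regime $\eta_1,\eta_2\ge0$, so that $\sinh(u+\eta_1+\eta_2)$ never vanishes on $(0,+\infty)$: the kernel then has no interior pole and stays bounded as $u\fl0$.

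Finally, with a uniform bound $\left|W(u,x,y)-\tfrac12 e^{-u|x-y|}\right|\le C\,e^{-cu}$ valid for $(x,y)$ near the origin, and $|J_0(ru)|\le1$, the first integral converges absolutely. Differentiation under the integral sign is then legitimate, because every derivative in $r$, $x$ or $y$ only brings down factors that grow polynomially in $u$ (from $J_0'$ and from the exponents), which remain dominated by $e^{-cu}$. Therefore the first integral defines a $C^\infty$ function of $(r,x,y)$ in a neighbourhood of the origin, in particular finite at $r=x=y=0$; this isolates the entire singularity of $I$ into the explicit Coulomb term $\tfrac12/\sqrt{r^2+|x-y|^2}$, as claimed.
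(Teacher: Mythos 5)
Your proposal is correct, and its core --- the add-and-subtract decomposition followed by Lemma~\ref{lembessel} with $\nu=0$, $\alpha=|x-y|$, $\beta=r$ --- is exactly the paper's own proof; the paper's argument consists of nothing more than those two steps. Where you genuinely go beyond the paper is in refusing to take the second sentence of the theorem on faith. The paper implicitly delegates the exponential-decay claim to its first lemma, but, as you correctly point out, that lemma only asserts the asymptotic equivalence $W\sim\frac12 e^{-u|x-y|}$, which gives $W-\frac12 e^{-u|x-y|}=o\bigl(e^{-u|x-y|}\bigr)$; this degenerates precisely on the diagonal $x=y$ --- the locus containing the origin --- where it yields only $W-\frac12=o(1)$, not even enough to guarantee convergence of the first integral, let alone its regularity. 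Your explicit rewriting of the difference as four decaying exponentials over $1-e^{-2u-2\eta_1-2\eta_2}$ is correct (I checked the exponents, using $x\ge y$ so that $|x-y|=x-y$; the cancellation of the $e^{-u|x-y|}$ term works out exactly as you state), and the observation that the two slowest slopes $x+y-1$ and $-(x+y+1)$ stay bounded away from $0$ uniformly near the origin is the missing quantitative content; together with $\eta_1,\eta_2>0$ (so that $\sinh(u+\eta_1+\eta_2)$ has no zero on $[0,+\infty)$) and differentiation under the integral sign, it establishes both the decay and the regularity claims that the paper states but never proves. Your treatment of the validity condition of Lemma~\ref{lembessel} (the case $x=y$, recovered by continuity via $\int_0^{+\infty}J_0(ru)\,du=1/r$) is likewise a point of care the paper skips. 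In short: the identity \eqref{CoulCorr} is proved the same way in both, but your version is strictly more complete, and the extra bookkeeping you flag as the ``genuine obstacle'' is indeed the part of the theorem the paper leaves unproved.
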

\begin{proof}
We substract the asymptotic behavior
$$
I(r,x,y)=\int_0^{+\infty} \left(W(u,x,y)-\frac{1}{2}e^{-u |x-y| }\right) J_0(r u) du+\frac{1}{2} \int_0^{+\infty} J_0(r u)e^{-u |x-y| } du
$$
Using lemma (\ref{lembessel}) we get
$$
 \int_0^{+\infty} e^{-u |x-y| }  J_0(r u)du=\frac{1}{\sqrt{r^2+|x-y|^2}}
$$
and the result follows.
\end{proof}

We have exhibited the screened electrostatic in a dielectric slab as the usual Coulomb potential plus a correcting term. This  term is exponentially small as the slab width is large compare to the relative distance between the two electric charges.
Furthermore our expression (\ref{CoulCorr}) is regularized and do not present any divergence as the height $z$ approaches zero. This expression is suitable for further numerical calculations aiming to compute the binding energy of an exciton in a 2D materials. In the following we show how a multiscale approach allows to obtain the 2D potential when the width of the slab is very small with respect to the exciton Bohr radius.

 \section{Asymptotics of the spectral problem}
 In the previous analysis two scales enter into the problem, namely the slab width $d$ and the height of the electric charges $z_0,z'_0$. Working with quantities normalized by the slab width $d$ helped us exhibiting the regular part of the 2D electrostatic potential. Finding the exciton binding energy introduces another scale: the exciton Bohr radius $a_0$. We are interested in problems where the Bohr radius is large as compared to the slab width: $a_0 \gg d$, that is, we focus on two-dimensional problems. Since now two length scales contribute to the problem, we need to introduce a new parameter $\eta=d/a_0$ in order to vary one length-scale independently of the other. Letting $\eta$ approach zero allows to deal with a 2D problem, by considering the electrostatic problem for a 2D sheet.
   
 \subsection{The multiple scale approach}
  We are now in a position to obtain the limit behavior of the Hamiltonian when $\eta$ tends to $0$.
  \begin{theorem}
  As $\eta \rightarrow 0$, the asymptotic 2D expansionn to first order in $\eta$, of the Hamiltonian describing the exciton is given by
  \be
  -\Delta_{||} - 4 \eta I_0(r),
  \ee
  where the effective 2D potential $I_0$ is given by
  \be \label{Vfinal}
  I_0(r)=\int_{Y} I(r,x,y) dx dy.
  \ee
  \end{theorem}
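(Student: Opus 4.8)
The plan is to exploit the separation of scales in the final spectral equation: the transverse kinetic operator confines the wave function to a single transverse mode, and the effective $2$D Hamiltonian is obtained by averaging the interaction over that mode. First I would pass to the fast transverse variables $x=\zeta/\eta$ and $y=\zeta'/\eta$, which range over the fixed square $Y=[-1/2,1/2]^2$; under this change of variables $\partial_\zeta=\eta^{-1}\partial_x$, so the kinetic terms $-\eta^2\kappa_1\partial_\zeta^{2}-\eta^2\kappa_2\partial_{\zeta'}^{2}$ become the $\eta$-independent transverse operator $L:=-\kappa_1\partial_x^{2}-\kappa_2\partial_y^{2}$ on $Y$, while the coupling term stays $-4\eta\,I(r,x,y)$. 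The rescaled Hamiltonian is therefore $H_\eta=-\Delta_{||}+L-4\eta\,I$, in which $L$ is $O(1)$ and the interaction is the only small term.

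The key structural fact is the spectral gap of $L$. With the natural (Neumann) boundary conditions on $\partial Y$ coming from the confinement of the charges inside the slab, $L$ has a simple lowest eigenvalue $0$ whose eigenfunction is the constant $\phi_0\equiv1$ (normalised since $|Y|=1$), and the rest of its spectrum lies above $\pi^2\min(\kappa_1,\kappa_2)>0$; this uniform average is exactly what the Neumann ground mode produces. I would then decompose any state as $\psi=u(r)\phi_0+\psi^\perp$, with $\psi^\perp$ orthogonal to $\phi_0$ in $L^2(Y)$, and insert the two-scale ansatz $\psi=\psi_0+\eta\psi_1+\cdots$, $E_\eta=E_0+\eta E_1+\cdots$. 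At order $\eta^0$ the equation reads $(-\Delta_{||}+L)\psi_0=E_0\psi_0$; since $L$ is positive on $\phi_0^\perp$ with a gap, the lowest branch forces $\psi_0=u_0(r)\phi_0$, i.e. the transverse profile is frozen in the constant mode and $-\Delta_{||}u_0=E_0u_0$.

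At order $\eta^1$ one gets $(-\Delta_{||}+L-E_0)\psi_1=(E_1+4\,I)\psi_0$. Applying the Fredholm alternative --- projecting onto $\phi_0$ by integrating over $Y$ and using self-adjointness of $L$ together with $L\phi_0=0$ (no boundary terms, by the Neumann condition) --- the transverse operator drops out of the $\phi_0$-component and I obtain the solvability condition
\be
\left(-\Delta_{||}-E_0\right)\bar\psi_1=\left(E_1+4\int_Y I(r,x,y)\,dx\,dy\right)u_0,
\ee
where $\bar\psi_1=\int_Y\phi_0\psi_1$. Reading off the $O(\eta)$ contribution to the effective in-plane operator acting on $u_0$ yields precisely $-\Delta_{||}-4\eta\,I_0(r)$ with $I_0(r)=\int_Y I(r,x,y)\,dx\,dy$, which is the claimed expansion. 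Equivalently, and more cleanly, the first-order effective potential is the expectation $\langle\phi_0,I\,\phi_0\rangle_{L^2(Y)}=\int_Y I\,dx\,dy$ of the coupling in the frozen transverse mode.

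The main obstacle is to make this reduction rigorous in the presence of the Coulomb singularity of $I$. The transverse average $I_0(r)$ must first be shown to be well defined: using the decomposition of the previous theorem, $I$ splits into a smooth, exponentially decaying kernel plus the singular piece $\tfrac12(r^2+|x-y|^2)^{-1/2}$, and integrating the latter over $(x,y)\in Y$ leaves a function of $r$ that is finite for $r>0$ and behaves like $-\log r$ as $r\to0$, i.e. a mild singularity that is locally integrable in the plane; this guarantees that the interaction is a relatively bounded (indeed form-small) perturbation of $-\Delta_{||}$ and justifies the Fredholm step. The remaining work is the quantitative control of $\psi^\perp$: the spectral gap of $L$ gives $\|\psi^\perp\|=O(\eta)$, so its feedback into the $\phi_0$-equation is $O(\eta^2)$ and does not affect the first-order potential, which is the heart of the estimate justifying the stated expansion.
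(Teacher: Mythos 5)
Your proof is correct in substance but follows a genuinely different route from the paper's. The paper never leaves the slow variables $(\zeta,\zeta')\in Y_{\eta}$ and argues entirely in the weak formulation: it tests the spectral equation against functions $\phi(r,\zeta,\zeta')$, divides by $\eta^{2}$, and invokes the Lebesgue-differentiation fact $\eta^{-2}\int_{Y_{\eta}}f \rightarrow f(0,0)$ to replace $\psi_{\eta}$ and $\phi$ by their values at $\zeta=\zeta'=0$, while the rescaling of the integration domain turns the interaction term into the average $\int_{Y} I(r,x,y)\,dx\,dy$; the transverse kinetic terms are simply declared $O(\eta^{2})$. You instead pass to the fast variables $(x,y)\in Y$, identify the $O(1)$ transverse operator $L=-\kappa_{1}\partial_{x}^{2}-\kappa_{2}\partial_{y}^{2}$, and perform an adiabatic reduction onto its constant ground mode via a two-scale ansatz and a Fredholm solvability condition, with the spectral gap of $L$ supplying the $O(\eta^{2})$ control of the orthogonal component. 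Your route buys more than the paper's: it exposes the mechanism (projection onto the frozen transverse mode), gives a quantitative error estimate, and correctly raises the integrability of $I_{0}$ near $r=0$ (the logarithmic singularity and form-boundedness), a point the paper never addresses.

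One caveat, which is the single weak link in your argument: the Neumann condition is not a consequence of ``confinement of the charges inside the slab.'' Confinement is ordinarily modeled by Dirichlet conditions, under which the transverse ground mode is proportional to $\cos(\pi x)\cos(\pi y)$, and your own Fredholm step would then produce the weighted average $\int_{Y} I(r,x,y)\,4\cos^{2}(\pi x)\cos^{2}(\pi y)\,dx\,dy$ (plus a constant confinement energy), not $\int_{Y} I\,dx\,dy$. The plain average asserted in the theorem holds exactly when the transverse profile is asymptotically flat. The paper makes the very same assumption implicitly: its claim $\int_{Y_{\eta}}\partial_{\zeta}\psi_{\eta}\,\partial_{\zeta}\phi=O(\eta^{2})$, and its use of $\psi_{\eta}(r,\eta x,\eta y)\approx\psi_{\eta}(r,0,0)$, both fail for a profile oscillating on the scale $\eta$, whose $\zeta$-derivative is $O(\eta^{-1})$. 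So your proof and the paper's rest on the same unstated hypothesis; yours has the merit of localizing it in the choice of boundary condition, but you should state Neumann as an assumption rather than derive it from confinement.
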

  \begin{proof}
  Consider the spectral problem
  $$
  \left\{-\Delta_{||} -\eta^2 \mu/m_1 \partial^2_\zeta-\eta^2 \mu/m_2 \partial^2_{\zeta'}+\eta V(r,\zeta/\eta,\zeta'/\eta)\right\}\psi_{\eta}=E_{\eta} \psi_{\eta}.
$$
In order to obtain the asymptotic behavior when $\eta \rightarrow 0$, we put this expression in variational form. To do so we use a test function $\phi(r,\zeta, \zeta')$ such that, for each $r \in [0,+\infty[$, the function $(\zeta,\zeta') \rightarrow \phi(r,\zeta,\zeta')$ belongs to $D(Y)$, i.e. the Schwartz space of $C^{\infty}$ functions with compact support in $Y$. 

It holds
\begin{eqnarray*}
\int_{\D} \nabla_{||}\psi_{\eta} \nabla \phi +\eta^2 \kappa_1 \int_{\D}  \partial_\zeta \psi_{\eta} \partial_\zeta \phi+\\
\eta^2 \kappa_2 \int_{\D}  \partial_{\zeta'} \psi_{\eta} \partial_\zeta \phi+\eta \int_{\D}  V(r,\zeta/\eta,\zeta'/\eta) \psi \phi=E_{\eta}  \int_{\D} \psi_{\eta} \phi.
\end{eqnarray*}
Let us now divide this equality by $\eta^2$. Using Lebesgue theorem, we know that, for a continuous summable function $f(x,y)$ defined on $Y_{\eta}$, one has: $\lim_{\eta \rightarrow 0} \eta^{-2} \int_{Y_{\eta}} f(x,y)=f(0,0)$. Therefore, we obtain
$$
\int_{Y_{\eta}}  \partial_\zeta \psi_{\eta} \partial_\zeta \phi=O(\eta^2),\, \int_{Y_{\eta}}  \partial_{\zeta'} \psi_{\eta} \partial_{\zeta'} \phi=O(\eta^2),
$$
and 
$$
\frac{1}{\eta^2}\int_{\D} V(r,\zeta/\eta,\zeta'/\eta) \psi_{\eta} \phi d\zeta d\zeta'=\int dr \left[ \int_{Y}  V(r,x,y)dxdy \right] \psi^0_{\eta}(r) \phi^0(r)+o(1),
$$
where
$$
\psi^0_{\eta}(r)= \psi_{\eta}(r,0,0),\,\phi^0(r)=\phi(r,0,0).
$$
We conclude that, up to order $\eta$, the variational relation is
$$
\int \nabla_{||}\psi^0_{\eta} \nabla_{||} \phi^0 +\eta \int dr \left[ \int_{Y}  V(r,x,y)dxdy \right] \psi^0_{\eta}(r) \phi^0(r)=E_{\eta}  \int \psi^0_{\eta} \phi^0.
$$
The result follows.
\end{proof}
\begin{figure}[ht] 
\begin{center} 
\includegraphics[width=10cm]{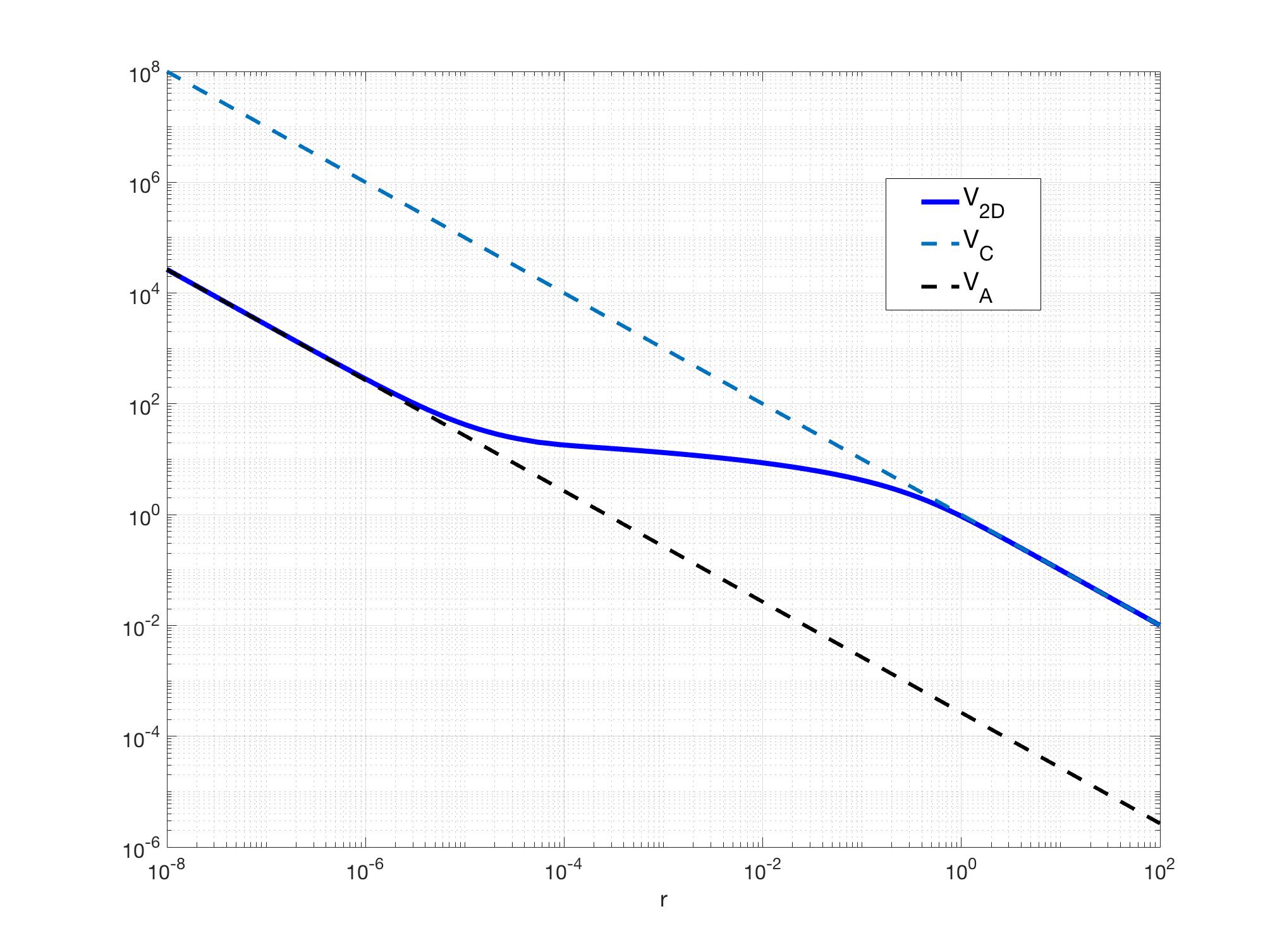}
\caption{LogLog plot of the effective 2D potential (\ref{Veff}). It is compared to the Coulomb potential and to the fitted potential (\ref{Vfit}).}\label{V2D}
\end{center} 
\end{figure}

 This result shows that contrarily to what could be intuitively believed, the effective 2D potential is not obtained from the 3D one simply by putting $z=z'=0$. Let us consider more specifically what happens with the strictly Coulomb part of the potential.
The effective potential in that case is
\be \label{Veff}
V_{2D}(r)= \int_{Y} \frac{1}{\sqrt{r^2+|z'-z|^2}}  dz dz' 
\ee
whereas by putting $z=z'=0$ one obtains simply $V_c=1/r$. In fig. (\ref{Veff}), we have plotted both $V_{2D}$ and $V_c$. There it can be seen that, as $r$ tends to $+\infty$, the effective potential behaves as the Coulomb potential. However, as $r$ tends to $0$ it behaves as
\be  \label{Vfit}
V_{A}= \frac{\lambda}{r},
\ee
where $\lambda=\frac{8}{3}\, 10^{-4}$, an expression that was obtained by fitting.

In \cite{Keldysh}, the following 2D approximation is introduced (the expression is adapted in order to take into account the change of unit system)
$$ 
V_K(\rho)=\frac{ee'}{2 \pi \epsilon d} I_K(\rho),
$$
where 
\be\label{VK}
I_K(\rho)=\int_0^{+\infty} \frac{J_0(t) dt}{t+\frac{\epsilon_1+\epsilon_2}{\epsilon_f} r}.
\ee
 \begin{figure}[ht] 
\begin{center} 
\includegraphics[width=10cm]{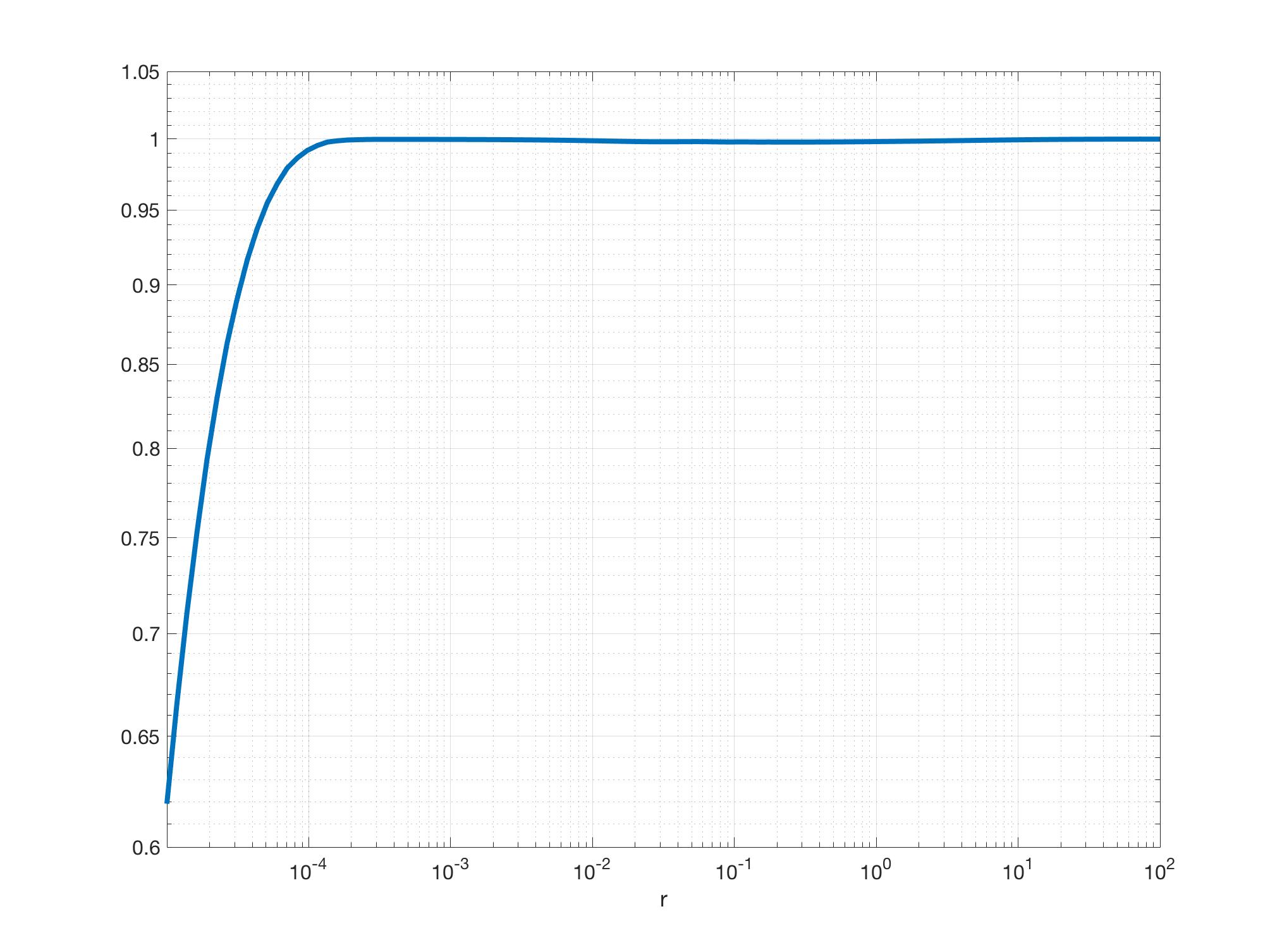}
\caption{LogLog plot of the ratio $I_K/I_0$ between the Keldysh potential (\ref{VK}) and the effective potential (\ref{Vfinal}).}\label{VmeanVK}
\end{center} 
\end{figure}

It can be expressed using Struve and Neumann functions \cite{Keldysh,pulci2015} but this representation is of little interest since the integral in (\ref{VK}) can be computed numerically very easily. This form of the 2D potential was used in several articles, see e.g. \cite{pulci2015}, in order to compute the binding energy of excitons. Several authors have derived this expression using various approaches \cite{thygesen2015,rubio2011}. Let us compare our result (\ref{Vfinal}) and this potential. Using $\epsilon_1=\epsilon_2=1$ and $\epsilon_f=10$. The ratio $I_K/I_0$ is plotted in fig. \ref{VmeanVK}. An excellent agreement is seen on a large range of values. It is only near the singularity that the potentials differ largely from each other.
Our expression for an effective 2D potential is exact, as it comes from a rigorous limit analysis. Its validity is not limited to specific values of the variable $r$. As such it can take into account precisely the interaction between particles at small distances.
 \section{Conclusion}
 We have derived the asymptotic behavior of the Hamiltonian for the exciton wave function in a thin dielectric slab pf width $d$. Our approach is based on a limit analysis of the Hamiltonian using a suitable small parameter $\eta=d/a_0$, where $a_0$ is the Bohr radius. Our result compares well with the known results of the literature, although it is not limited to asymptotic values of the permittivity contrast. The method used here could be extended to obtain an expansion of the Hamiltonian with respect to $\eta$ \cite{moi}. We have also provided a regularized expression for the full 3D potential which makes explicit to what extend this potential departs from the Coulomb ones. This could be especially useful for implementing a perturbation analysis. 
\section{Appendix: Derivation of the screened potential.}
\subsection{Expression of the Green function}
\subsubsection{The equation in the Fourier domain}
Let us compute then the potential $\gf(\bro,z;z'_0)$ created by a unit charge situated at $(0,z'_0)$. It satisfies the Poisson equation in the Schwartz distributions meaning:
\begin{equation}
-\nabla\cdot(\e(z)\nabla\gf)=\delta(z-z'_0)\otimes \delta(\bro)
\end{equation}
and the conditions at infinity: $\lim_{z \fl \pm \infty} \f=0$.

By invariance of the medium in the $\bro$ directions, a partial Fourier transform is performed. The Fourier transform is defined by:
$$
\f(\k,z;z'_0)=\int \gf(\bro,z;z'_0) e^{-i \k \cdot \bro} d^2\bro, 
$$
and the inverse transform is:
$$
\gf(\bro,z;z'_0)=\frac{1}{(2\pi)^2} \int \f(\k,z;z'_0) e^{i \k \cdot \bro} d^2\k
$$

The Fourier transform of $\f$ satisfies, in the distributional meaning, the following differential equation:
\begin{equation}\label{poisk}
-\partial_x (\e\, \partial_x \f)+k^2 \e \f=\delta(z-z'_0),
\end{equation}
where: $k=|\k|$. For simplicity, the dependence of $\f$ with respect to $\k$ and $z'_0$ is implicit: we denote $\f(z)$ instead of $\f(\k,z;z'_0)$.
\subsection{The boundary conditions}
There are four regions to be considered:
\begin{enumerate}
\item $z>d/2$, $-\f''+k^2 \f=0$, hence, taking into account the condition at infinity: $\f(z)=A_2 \,e^{-k(z-d/2)}$,
\item $z'_0<z<d/2$, $\f(z)=A^+_f \,e^{-k(z-d/2)}+B^+_f \,e^{k(z-d/2)}$,
\item $-d/2<z<z'_0$, $\f(z)=A^-_f \,e^{-k(z-d/2)}+B^-_f \,e^{k(z-d/2)}$,
\item $z<-d/2$, $\f(z)=B_1 \,e^{k(z+d/2)}$
\end{enumerate}
The boundary conditions at the interfaces of each domain are implied by eq.(\ref{poisk}):
the function $\f$ is continuous everywhere and the function $\e \partial_x \f$ is continuous everywhere except at $z'_0$ where it as a jump: $\partial_x \f(z_0^+)-\partial_x \f(z_0^-)=-1/\e$.

For $\f$ this gives the relations:
\begin{itemize}
\item  $z=d/2$, $\f(\left.\frac{d}{2}\right|^+)=\f(\left.\frac{d}{2}\right|^-)$: 
\begin{equation}\label{d/2} A_2=A_f^++B_f^+,\end{equation}
\item $z=z'_0$, $\f(\left.z'_0\right| ^+)=\f(\left.z'_0\right| ^-)$: 
\begin{equation}\label{z'_0} A_f^+\, e^{-k(z'_0+d/2)}+B_f^+\,e^{k(z'_0+d/2)}=
A_f^-\, e^{-k(z'_0-d/2)}+B_f^-\,e^{k(z'_0-d/2)}\, \end{equation}
\item $z=-d/2$, $\f(\frac{d}{2}^+)=\f(\frac{d}{2}^-)$: 
\begin{equation}\label{-d/2}
Af^-+B_f^-=B_1
\end{equation}

For the derivative, we obtain:
\item $z=d/2$,  \begin{equation}\label{dd/2}-\e_2 A_2=-\e_f A_f^++\e_f B_f^+,\end{equation}
\item $z=z'_0$, \begin{equation}\label{dz'_0}-k A_f^-\, e^{-k(z'_0+d/2)}+k B_f^-\,e^{k(z'_0+d/2)}-
(-k A_f^+ e^{-k(z_0-d/2)}+k B_f^+ e^{k(z_0-d/2)})=1/\e_f ,\end{equation}
\item $z=-d/2$, \begin{equation}\label{-dd/2}-\e_f Af^-+\e_f B_f^-=\e_1 B_1\end{equation}
\end{itemize}
From [(\ref{d/2}),(\ref{dd/2})] and  [(\ref{-d/2}),(\ref{-dd/2})],we get:
\begin{equation}\label{prop}
A_f^+=\frac{\e_f+\e_2}{\e_f-\e_2} \, B_f^+ ,\, A_f^-=\frac{\e_f-\e_1}{\e_f+\e_1} \, B_f^-
\end{equation}
In order to clarify the derivation, we denote:
\begin{equation*}
\tau_1=\frac{\e_f+\e_1}{\e_f-\e_1}, \, \tau_2=\frac{\e_f+\e_2}{\e_f-\e_2}
\end{equation*}
From (\ref{d/2}) and (\ref{prop}), we get:
\begin{equation}\label{B+-}
B^-_f =\tau B^+_f
\end{equation}
where 
\begin{equation}\label{tau}
\tau=\frac{\tau_2 \, e^{-k(z'_0-d/2)}+e^{k(z'_0-d/2)}}{\tau_1^{-1} e^{-k(z'_0+d/2)}+e^{k(z'_0+d/2)}}
\end{equation}
This last expression can be simplified. Let us denote: 
\begin{equation*}
\eta_{n}=\log(\sqrt{\tau_{n}}), \,\hbox{i.e. } \sqrt{\tau_{n}}=e^{\eta_{n}}, \, n=1,2.
\end{equation*}
From (\ref{tau}), we get:
\begin{equation}\label{taufinal}
\tau=\sqrt{\tau_1\,\tau_2} \, \frac{\cosh\left[k(d/2-z'_0) +\eta_2\right]}{\cosh\left[k(d/2+z'_0) +\eta_1\right]}
\end{equation}
We are looking first for the expression of $B_f^+$. It is obtained from (\ref{dz'_0}), by using [(\ref{prop})(\ref{B+-})(\ref{taufinal})]:
$$
k \tau B_f^+ \left[-\tau_1^{-1} e^{-k(z'_0+d/2)}+e^{k(z'_0+d/2)}\right]-k B_f^+ \left[-\tau_2 e^{-k(z'_0-d/2)}+e^{k(z_0-d/2)} \right]=\frac{1}{\e_f} 
$$
this gives, upon using the expression (\ref{taufinal}) for $\tau$
 \begin{equation*}
B_f^+=\frac{1}{2k\e_f\sqrt{\tau_2}}  \Gamma, A_f^+=\tau_2\, B_f^+\,,
 \end{equation*}
 where:
 \begin{equation}\label{gamma}
 \Gamma=\frac{\cosh\left[k(d/2+z'_0) +\eta_1\right]}{\sinh\left[k\,d+\eta_1+\eta_2\right]}.
 \end{equation}
 \subsection{Expression of the energy of interaction}
 The electrostatic energy between both charges is given by:
 \begin{equation*}
 V(\bro,z_0,z'_0)=\frac{e\,e'}{(2\pi)^2 \e_0} \int \f(\k,z_0;z'_0) e^{i \k\cdot \bro} d^2\k,
 \end{equation*}
For $z_0 \geq z'_0$, it holds:
$$
\f(\k,z_0;z'_0)=A^+_f \,e^{-k(z_0-d/2)}+B^+_f \,e^{k(z_0-d/2)}\, ,
$$
this gives:
\begin{eqnarray*}
\f(\k,z_0;z'_0)=\frac{\Gamma}{k\e_f} \cosh[k(d/2-z_0)+\eta_2]
\end{eqnarray*}
Using (\ref{gamma}), it comes:
$$
\f(\k,z_0;z'_0)=\frac{1}{k\e_f} \frac{\cosh\left[k(d/2+z'_0) +\eta_1\right]\,\cosh[k(d/2-z_0)+\eta_2]}{\sinh\left[k\,d+\eta_1+\eta_2\right]} \,
$$
and finally:
$$
 V(\bro,z_0,z'_0)=\frac{e\,e'}{4\pi^2 \, \e_f} \int \frac{\cosh\left[k(\frac{d}{2}+z'_0) +\eta_1\right]\,\cosh[k(\frac{d}{2}-z_0)+\eta_2]}{k\, \sinh\left(k\,d+\eta_1+\eta_2\right)} e^{i \k\cdot \bro} d^2\k,
$$
There is a typo in the expression given in the paper by Keldysh: in the integral defining the energy of interaction, the term $e^{2 \k \cdot \bro}$ should be replaced by $e^{i \k\cdot \bro}$.

Consider the double integral in polar coordinates: $(\rho,\theta)$: $\bro=\rho(\cos\theta,\sin\theta)$ and $k(\cos\psi,\sin\psi)$: $\bro \cdot \k=\rho\, k \cos(\theta-\psi)$. It is possible without loss of generality to take $\theta=\pi/2$, from invariance of the problem under a rotation around axis $Oz$.
\begin{equation*}
 V(\bro,z_0,z'_0)=\frac{e\,e'}{4\pi^2 \, \e_f} \int k dk \frac{\cosh\left[k(\frac{d}{2}+z'_0) +\eta_1\right]\,\cosh[k(\frac{d}{2}-z_0)+\eta_2]}{k\, \sinh\left(k\,d+\eta_1+\eta_2\right)} \int_0^{2\pi} d\psi \,e^{i k\rho \sin(\psi)} ,
 \end{equation*}
 Let us recall the generating series for Bessel functions $J_n(\rho)$:
 $$
 e^{i \rho \sin \psi}=\sum_n J_n(\rho) e^{in\psi},
 $$
 we obtain: $J_0(\rho)=\frac{1}{2\pi} \int_0^{2\pi} e^{i \rho \sin \psi}\, d\psi$, therefore it holds:
 $$
 V(\rho,z_0,z'_0)=\frac{e\,e'}{2\pi \, \e_f} \int_0^{+\infty} \frac{\cosh\left[k(\frac{d}{2}+z'_0) +\eta_1\right]\,\cosh[k(\frac{d}{2}-z_0)+\eta_2]}{\sinh\left(k\,d+\eta_1+\eta_2\right)} J_0(k\rho) dk ,
$$
 Finally, we change to the new variable: $u=k/d$ to get:

  \begin{equation}
 V(\rho,z_0,z'_0)=\frac{e\,e'}{2\pi \, \e_f d} \int_0^{+\infty} \frac{\cosh\left[u(\frac{1}{2}+\frac{z'_0}{d}) +\eta_1\right]\,\cosh[u(\frac{1}{2}-\frac{z_0}{d})+\eta_2]}{\sinh\left(u+\eta_1+\eta_2\right)} J_0(\frac{\rho}{d} u) du .
 \end{equation}

\end{document}